\newtheorem{thm}{Theorem}[section]
\newtheorem{rem}{Remark}[section]
\newtheorem{defn}{Definition}[section]
\title{ Applications of the Stroboscopic Tomography to Selected 2-Level Decoherence Models}
\author{Artur Czerwi{\'n}ski\\
\small ResearchGate: \url{www.researchgate.net/profile/Artur_Czerwinski} \\
\small Link to the article in Journal page: \url{http://link.springer.com/article/10.1007/s10773-015-2703-2}\\
  \small 1. Institute of Physics\\
 \small Nicolaus Copernicus University\\
  \small 87-100 Toru{\'n}\\
	\small 2. Center for Theoretical Physics \\
\small Polish Academy of Sciences \\
\small 02-668 Warszawa\\
\small DOI: 10.1007/s10773-015-2703-2 \\
}
\begin{document}
\maketitle

\begin{abstract}
In the paper we discuss possible applications of the so-called stroboscopic tomography (stroboscopic observability) to selected decoherence models of 2-level quantum systems. The main assumption behind our reasoning claims that the time evolution of the analyzed system is given by a master equation of the form $\dot{\rho} = \mathbb{L} \rho$ and the macroscopic information about the system is provided by the mean values $m_i (t_j) = Tr(Q_i \rho(t_j))$ of certain observables $\{Q_i\}_{i=1} ^r $ measured at different time instants $\{t_j\}_{j=1}^p$. The goal of the stroboscopic tomography is to establish the optimal criteria for observability of a quantum system, i.e. minimal value of $r$ and $p$ as well as the properties of the observables $\{Q_i\}_{i=1} ^r $.

\end{abstract}

\section{Introduction}

According to one of the most fundamental assumptions of quantum theory, the density matrix carries the achievable information about the quantum state of a physical system. In recent years the determination of the trajectory of the state based on the results of measurements has gained new relevance because the ability to create, control and manipulate quantum states has found applications in other areas of science, such as: quantum information theory, quantum communication and computing.

The identification of an unknown state by appropriate measurements is possible only if we have a set of identical copies of this state, because each state can be measured only once due to the fact that every measurement, in general, changes the state. Moreover, in order to create a successful model of quantum tomography one needs to find a collection of observables, such that their mean values provide the complete information about the state. In the standard approach to quantum tomography of 2-level systems one takes as the observables the set of Pauli matrices, denoted by $\{\sigma_1, \sigma_2, \sigma_3\}$, see for example \cite{altepeter04}. The reconstruction of the initial density matrix is possible due to the decomposition in the basis $\{ \mathbb{I}, \sigma_1, \sigma_2, \sigma_3 \}$, which has the form
\begin{equation}
\rho (0) = \frac{1}{2} \left( \mathbb{I} + \sum_{i=1}^3 s_i \sigma_i \right),
\end{equation}
where $s_i$ is the expectation value of $\sigma_i$ in state $\rho(0)$.
Thus in the standard approach one needs to measure three different physical quantities in order to reconstruct the density matrix of a 2-level system. In general for an N-level system one would need to measure $N^2 - 1$ different observables - more about general approach can be found in \cite{alicki87,genki03}. This fact implies that the standard approach seems rather impracticable as from experimental point of view it is difficult to find as many different observables.

Therefore, in this paper we follow the stroboscopic approach to quantum tomography which was proposed in \cite{jam83} and then developed in \cite{jam00,jam04}. In the stroboscopic approach we consider a set of observables $\{Q_i\}_{i=1} ^r $ (where $r< N^2 -1$) and each of them can be measured at time instants $\{t_j\}_{j=1}^p$. Every measurement provides a result that shall be denoted by $m_i (t_j)$ and can be represented as $m_i (t_j) = Tr(Q_i \rho(t_j))$. Because in this approach the measurements are performed at different time instants, it is necessary to assume that the knowledge about the character of evolution is available, e.g. the Kossakowski-Lindblad master equation \cite{gorini76} is known or, equivalently, the collection of Kraus operators. Knowledge about the evolution makes it possible to determine not only the initial density matrix but also the complete trajectory of the state. To make this issue clearer from now on we assume the following definition \cite{jam04}.
\begin{defn}
An N-level open quantum system is said to be $(Q_1, ...,Q_r)$-reconstructible on an interval $[0,T]$ if there exists at least one set of time instants $\{t_j\}_{j=1}^p$ ordered as $0\leq t_1 < ...< t_p \leq T$ such that the trajectory of the state can be uniquely determined by the correspondence
\begin{equation}\label{eq:definition1}
[0,T] \ni t_j \rightarrow m_i(t_j) = Tr(Q_i \rho(t_j))
\end{equation}
for $i=1,...,r$ and $j=1,...,p$.
\end{defn}
The outcomes that we obtain from the measurements can be presented in a matrix form as
\begin{equation}\label{eq:matrix}
\left[\begin{matrix} m_1(t_1) & m_1(t_2) &\cdots & m_1 (t_p)\\ m_2 (t_1) & m_2(t_2) & ... & m_2(t_p)\\ \vdots & \vdots & \ddots & \vdots\\m_r (t_1) & m_r(t_2) & \cdots & m_r (t_p)\end{matrix}\right].
\end{equation}
The fundamental question that we formulate is: \textit{Can we reconstruct the initial density matrix} $\rho(0)$ \textit{ for a given master equation from the set of measurement results presented in \eqref{eq:matrix}? }

Other questions that arise in this approach concern: the minimal number of observables for a given master equation and their properties as well as the minimal number of time instants and their choice. The general conditions for observability have been determined and will be presented here as theorems and the proofs can be found in papers \cite{jam83,jam00,jam04}.

\begin{thm}\label{thm:1}
For a quantum system which evolution is given by the Kossakowski-Lindblad master equation of the form
\begin{equation}\label{eq:kossakowski}
\frac{ d \rho}{d t } = \mathbb{L} \rho,
\end{equation}
where the operator $\mathbb{L}$ is called the generator of evolution, there exists a number (denoted by $\eta$) which is called the index of cyclicity and is interpreted as the minimal number of observables required to reconstruct the density matrix. The index of cyclicity can be computed from the equality \cite{jam83}
\begin{equation}
\eta := \max \limits_{\lambda \in \sigma (\mathbb{L})} \{ dim Ker (\mathbb{L} - \lambda \mathbb{I})\},
\end{equation}
where $\sigma (\mathbb{L})$ denotes the spectrum of the generator of evolution (i.e. the set of all eigenvalues of $\mathbb{L}$).
\end{thm}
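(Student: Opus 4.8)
The plan is to recast the reconstructibility requirement of \eqref{eq:definition1} as a purely algebraic observability condition, and then to identify the smallest admissible number of observables with a standard invariant of the generator $\mathbb{L}$.

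First I would write the solution of \eqref{eq:kossakowski} as $\rho(t)=e^{\mathbb{L}t}\rho(0)$, so that every datum reads $m_i(t)=\mathrm{Tr}\big(Q_i\,e^{\mathbb{L}t}\rho(0)\big)$. Each $m_i$ is real-analytic in $t$, hence knowing it on an interval is equivalent to knowing all of its derivatives at one instant, and
\[
\frac{d^{k}}{dt^{k}}m_i(t)=\mathrm{Tr}\big(Q_i\,\mathbb{L}^{k}\rho(t)\big)=\mathrm{Tr}\big((\mathbb{L}^{\ast})^{k}Q_i\cdot\rho(t)\big),
\]
where $\mathbb{L}^{\ast}$ is the adjoint of $\mathbb{L}$ for the Hilbert--Schmidt product $\langle A,B\rangle=\mathrm{Tr}(A^{\dagger}B)$ and I have used $Q_i^{\dagger}=Q_i$. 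Thus the whole information obtainable by probing $Q_1,\dots,Q_r$ along the trajectory is the collection of inner products of $\rho(0)$ with the operators spanning
\[
\mathcal{K}:=\mathrm{span}\{(\mathbb{L}^{\ast})^{k}Q_i:\ k\ge 0,\ i=1,\dots,r\}.
\]
By the Cayley--Hamilton theorem only the powers $k=0,\dots,\mu-1$ matter, where $\mu$ is the degree of the minimal polynomial of $\mathbb{L}$, so $\mathcal{K}$ is finite-dimensional; writing $e^{\mathbb{L}t}=\sum_{k=0}^{\mu-1}\alpha_k(t)\,\mathbb{L}^{k}$ shows moreover that measuring each $m_i$ at $\mu$ suitably chosen instants already yields the full list $\mathrm{Tr}(Q_i\,\mathbb{L}^{k}\rho(0))$, which is why a finite number of time instants suffices.

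Next I would note that two states produce identical data exactly when their difference is orthogonal to $\mathcal{K}$, so \eqref{eq:definition1} determines $\rho(0)$ uniquely if and only if $\mathcal{K}$ exhausts the whole operator space (the known normalization $\mathrm{Tr}\,\rho=1$ supplying the component along $\mathbb{I}$, the kernel direction of $\mathbb{L}^{\ast}$). The decisive step is then to find the least $r$ admitting observables with $\mathcal{K}$ equal to the full space. This is precisely the question of the minimal number of cyclic generators of $\mathbb{L}^{\ast}$: viewing the operator space as a module over the polynomial ring in one indeterminate, the indeterminate acting as $\mathbb{L}^{\ast}$, the $Q_i$ generate the module if and only if $\mathcal{K}$ is everything. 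By the structure theorem over a principal ideal domain the space splits into cyclic summands whose number $s$ equals the number of invariant factors $p_1\mid p_2\mid\cdots\mid p_s$. The geometric multiplicity of an eigenvalue equals the number of these factors having it as a root; since $p_1$ divides all the others, any root of $p_1$ is a root of every $p_j$ and attains the maximal count, whence $s=\max_{\lambda}\dim\ker(\mathbb{L}^{\ast}-\lambda\mathbb{I})$. As $\mathbb{L}$ and $\mathbb{L}^{\ast}$ have conjugate spectra with identical geometric multiplicities, this maximum is unchanged and equals $\eta$.

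The main obstacle is this last identification, which I would split into two bounds. For the lower bound, fix $\lambda$ with $g:=\dim\ker(\mathbb{L}^{\ast}-\lambda\mathbb{I})$ and pass to the quotient $V/(\mathbb{L}^{\ast}-\lambda\mathbb{I})V$, a space of dimension $g$ on which $\mathbb{L}^{\ast}$ acts as the scalar $\lambda$; there each generator spans at most a line, so at least $g$ generators are required and hence $r\ge\eta$. For the upper bound I would invoke the cyclic decomposition to exhibit exactly $\eta$ generators and read off the corresponding observables. The point demanding genuine care is the scalar-field subtlety: the Hermitian operators form a real space while $\sigma(\mathbb{L})$ may contain complex eigenvalues, so the module-theoretic count must be performed after complexification (or with the real invariant factors suitably reinterpreted), and one has to verify that this does not change the value of $\eta$.
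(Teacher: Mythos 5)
The paper does not actually prove this theorem: it is quoted from the literature, with the proofs explicitly deferred to \cite{jam83,jam00,jam04}. Your reconstruction follows what is essentially the standard route of those references -- reduce reconstructibility to the condition that the Krylov subspaces $K_\mu(\mathbb{L},Q_i)$ exhaust $B_*(\mathcal{H})$ (the paper's Theorem 2), and then identify the minimal number of cyclic generators of $\mathbb{L}^*$ with the number of invariant factors, hence with the maximal geometric multiplicity -- and the argument is essentially sound: the quotient $V/(\mathbb{L}^*-\lambda\mathbb{I})V$ lower bound and the rational-canonical-form upper bound are exactly the right two halves. Two points deserve tightening. First, the scalar-field issue you flag is real but benign: the number of invariant factors of a matrix is unchanged under field extension, so computing $\max_\lambda \dim\mathrm{Ker}(\mathbb{L}-\lambda\mathbb{I})$ over $\mathbb{C}$ after complexifying the real space of Hermitian operators gives the same count, and Hermitian generators can then be recovered because $\mathbb{L}^*$ preserves the real subspace $B_*(\mathcal{H})$; you should say this rather than leave it as a caveat. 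Second, the paper's observability condition runs the Minkowski sum from $i=0$ with $Q_0=\mathbb{I}$ supplied for free by normalization (and $\mathbb{L}^*\mathbb{I}=0$, so $\mathbb{I}$ is always an eigenvector at $\lambda=0$); your quotient argument counts \emph{all} generators including $\mathbb{I}$, so when the maximum of the geometric multiplicity is attained only at $\lambda=0$ the bookkeeping of whether the identity is counted among the $\eta$ observables needs to be made explicit -- this off-by-one ambiguity is present in the paper's own informal statement as well, but a self-contained proof should resolve it rather than inherit it.
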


According to theorem 1 for every generator of evolution there always exists a set of $\eta$ observables such that the system is $(Q_1,...Q_{\eta})$-reconstructible. Moreover if the system is also $(Q_1,...Q_{\eta'})$-reconstructible, then $\eta' \geq \eta$. The index of cyclicity seems the most important factor when one is considering the usefulness of the stroboscopic approach to quantum tomography. This figure indicates how many distinct experimental setups one would have to prepare to reconstruct the initial density matrix in an experiment. The index of cyclicity is a natural number from the set $\{1, 2, \dots, N^2 -1\}$ (where $N= dim\mathcal{H}$) and the lower the number the more advantageous it is to employ the stroboscopic approach instead of the standard tomography. Moreover one can notice that the index of cyclicity can be understood as the greatest geometric multiplicity of eigenvalues of the generator of evolution. Thus, one can conclude that the index of cyclicity has physical interpretation, which is important from experimental point of view, but its value depends on the algebraic properties of the generator of evolution. Therefore, the question whether the stroboscopic tomography is worth employing or not depends primarily on the character of evolution of the quantum system. 

Another problem that we are interested in relates to the necessary condition that the observables $(Q_1,...Q_r)$ need to fulfill so that the system with dynamics given by \eqref{eq:kossakowski} will be $(Q_1,...Q_r)$-reconstructible. First, we introduce a denotation $ B(\mathcal{H})$ which shall relate to the vector space of all linear operators on $\mathcal{H}$. Then by $\langle A | B \rangle$ we shall denote the inner product in this space, which is defined as
\begin{equation}
\langle A | B \rangle = Tr(A^* B).
\end{equation}

Furthermore, one can notice that assuming the dynamics given by \eqref{eq:kossakowski} the formula for $\rho (t)$ at an arbitrary time instant can be expressed in terms of a semigroup
\begin{equation}
\rho (t) = exp (\mathbb{L} t) \rho(0) = \sum_{k=0} ^{\mu-1} \alpha_k (t) \mathbb{L}^k \rho(0),
\end{equation}
where $\mu$ stands for the degree of the minimal polynomial of $\mathbb{L}$.

This observation enables us to expand the formula for results of measurements (see \eqref{eq:definition1}) in the following way
\begin{equation}\label{eq:measure}
m_i (t_j) = \langle Q_i | \rho(t_j) \rangle = \sum_{k=0} ^{\mu-1} \alpha_k (t_j) \langle Q_i | \mathbb{L}^k \rho(0) \rangle =  \sum_{k=0} ^{\mu-1} \alpha_k (t_j) \langle (\mathbb{L}^*)^k Q_i | \rho(0) \rangle,
\end{equation}
where $\mathbb{L}^*$ is the dual operator to $\mathbb{L}$ or, in other words, $\mathbb{L}$ in the Heisenberg representation.

It can be proved that the functions $\alpha_k (t)$ are mutually linearly independent and can be computed from a system of differential equations \cite{jam04}. Therefore, the data provided by the experiment allows us to calculate the projections $\langle (\mathbb{L}^*)^k Q_i | \rho(0) \rangle$ for $ k = 0, 1, \cdots,\mu -1$ and $i= 1, 2, \cdots, r$. It can be observed that the initial state $\rho(0)$ (and consequently the trajectory $exp(\mathbb{L} t) \rho(0)$) can be uniquely determined if and only if the operators $(\mathbb{L}^*)^k Q_i$ span the vector space of all self-adjoint operators on $\mathcal{H}$. This space shall be denoted by $ B_*(\mathcal{H})$ and will be referred to as the Hilbert-Schmidt space. Now, if the evolution of the system is given by \eqref{eq:kossakowski} the conclusion can be presented as a formal theorem.

\begin{thm}
The quantum system is $(Q_1,...Q_r)$-reconstructible if and only if the operators $\{Q_1, \dots, Q_r\}$ fulfill the condition \cite{jam83,jam00}
\begin{equation}
\bigoplus \limits_{i=0}^r K_\mu (\mathbb{L},Q_i) = B_*(\mathcal{H}),
\end{equation}
where $\bigoplus$ denotes the Minkowski sum of subspaces, $\mu$ is the degree of the minimal polynomial of $\mathbb{L}$ and $K_\mu (\mathbb{L}, Q_i)$ denotes Krylov subspace, which is defined as
\begin{equation}
 K_\mu (\mathbb{L}, Q_i) := Span \{ Q_i, \mathbb{L}^* Q_i, (\mathbb{L}^*)^2 Q_i, ...,(\mathbb{L}^*)^{\mu-1} Q_i \}.
\end{equation}
\end{thm}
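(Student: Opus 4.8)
The plan is to use the expansion \eqref{eq:measure} as the bridge between the raw measurement data and the geometric span condition, and then to recast ``unique determination of $\rho(0)$'' as a statement about orthogonal complements in $B_*(\mathcal{H})$. The argument splits naturally into two equivalences which are chained together at the end.

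First I would show that knowing the measurement matrix \eqref{eq:matrix} is equivalent to knowing all the projections $c_{ik} := \langle (\mathbb{L}^*)^k Q_i | \rho(0)\rangle$ for $i=1,\dots,r$ and $k=0,\dots,\mu-1$. By \eqref{eq:measure} each entry is the linear combination $m_i(t_j)=\sum_{k=0}^{\mu-1}\alpha_k(t_j)\,c_{ik}$, so for fixed $i$ the vector of outcomes $(m_i(t_1),\dots,m_i(t_p))$ is the image of $(c_{i0},\dots,c_{i,\mu-1})$ under the matrix $[\alpha_k(t_j)]_{j,k}$. Since the functions $\alpha_0,\dots,\alpha_{\mu-1}$ are mutually linearly independent, one can select $p=\mu$ instants $t_1<\dots<t_\mu$ in $[0,T]$ for which the $\mu\times\mu$ sampling matrix $[\alpha_k(t_j)]$ is invertible, linear independence of a finite family of functions always permitting such a choice of nodes. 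With this choice the system is invertible, so the data determine the projections and conversely. This is the step I expect to be the main obstacle, since it is where the cited linear independence of the $\alpha_k$ together with the freedom in placing the time instants must be converted into genuine invertibility of the sampling matrix.

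Next I would reduce reconstructibility to a spanning condition. Write $S:=\{(\mathbb{L}^*)^k Q_i : 1\le i\le r,\ 0\le k\le\mu-1\}$; because $\mathbb{L}^*$ preserves self-adjointness and each $Q_i$ is an observable, every element of $S$ lies in $B_*(\mathcal{H})$, and $\mathrm{Span}\,S=\sum_{i=1}^r K_\mu(\mathbb{L},Q_i)$, which is exactly the Minkowski sum appearing in the statement. The map sending a self-adjoint $X$ to the list $(\langle v | X\rangle)_{v\in S}$ is injective precisely when $S$ spans $B_*(\mathcal{H})$: if $\mathrm{Span}\,S=B_*(\mathcal{H})$ and two self-adjoint operators yield the same projections, their difference $X$ satisfies $\langle v | X\rangle=0$ for all $v\in B_*(\mathcal{H})$, hence $\langle X | X\rangle=0$ and $X=0$; conversely, if $\mathrm{Span}\,S\subsetneq B_*(\mathcal{H})$, the orthogonal complement contains a nonzero self-adjoint $X$, and then $\rho(0)$ and $\rho(0)+\varepsilon X$ produce identical projections, so $\rho(0)$ is not uniquely determined.

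Finally I would assemble the two equivalences: by the first step the system is $(Q_1,\dots,Q_r)$-reconstructible iff the projections $c_{ik}$ determine $\rho(0)$ uniquely, and by the second step this holds iff $S$ spans $B_*(\mathcal{H})$, i.e. iff $\sum_{i=1}^r K_\mu(\mathbb{L},Q_i)=B_*(\mathcal{H})$. One point I would flag for care is that the definition of reconstructibility only requires the existence of \emph{some} admissible set of time instants; the ``only if'' direction is nevertheless safe, because for \emph{every} choice of instants \eqref{eq:measure} realizes the data as a fixed linear image of the projections, so a failure of the span condition obstructs reconstruction for all time instants simultaneously.
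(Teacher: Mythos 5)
Your overall strategy is the same as the paper's (the paper only sketches the argument before the theorem and defers details to the cited references, but the sketch is exactly your two-step reduction: invertibility of the sampling matrix $[\alpha_k(t_j)]$, then a spanning condition via orthogonal complements). However, there is a genuine gap in your second step: you take the Minkowski sum over $i=1,\dots,r$, whereas the theorem sums from $i=0$ with $Q_0=\mathbb{I}$ (see the Remark following the theorem). This is not a cosmetic index shift. The unknown $\rho(0)$ is not an arbitrary element of $B_*(\mathcal{H})$ but a density matrix, so the projection $\langle \mathbb{I}\,|\,\rho(0)\rangle = Tr\,\rho(0)=1$ is known for free; the correct criterion is that $S\cup\{\mathbb{I}\}$ spans $B_*(\mathcal{H})$, not that $S$ alone does. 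With your condition the ``only if'' direction is false: the dephasing model of Section 2 is a concrete counterexample, since there $S=\{\sigma_1,\,\sigma_2,\,\sigma_2+\sigma_3\}$ spans only the three-dimensional traceless subspace, yet the system is $(Q_1,Q_2)$-reconstructible and the paper writes down $\rho(0)$ explicitly in \eqref{eq:23}.

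The same omission undermines your counterexample construction. To conclude non-reconstructibility from $\mathrm{Span}\,S\subsetneq B_*(\mathcal{H})$ you perturb to $\rho(0)+\varepsilon X$ with $X$ in the orthogonal complement, but this perturbation must itself be a density matrix: you need $Tr\,X=0$ (i.e.\ $X\perp\mathbb{I}$) and you need to start from a full-rank $\rho(0)$ with $\varepsilon$ small enough to preserve positivity. If the orthogonal complement of $\mathrm{Span}\,S$ meets the traceless operators only in $\{0\}$, no admissible perturbation exists and reconstruction still succeeds --- which is precisely why the identity must be adjoined to the family before taking complements. Repairing the proof amounts to replacing $S$ by $S\cup\{\mathbb{I}\}$ throughout the second step and restricting the perturbation $X$ to traceless self-adjoint operators; the first step and the final assembly are otherwise sound.
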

\begin{rem}
In the theorem 2 we denote by $Q_0$ an identity matrix of the appropriate dimension. One can notice that for any generator of evolution $\mathbb{L}$ we have $K_\mu (\mathbb{L}, \mathbb{I}) = \mathbb{I}$.
\end{rem}

When discussing the usefulness of the stroboscopic tomography, it can be observed that if one takes a hermitian operator $\tilde{Q}$ which belongs to the invariant subspace of the Heisenberg generator $\mathbb{L}^*$, then $ K_\mu (\mathbb{L}, \tilde{Q})= \tilde{Q}$. Therefore, multiple measurement of the same observable leads to projections of $\rho(0)$ into distinct operators only if the observable does not belong to the invariant subspace of $\mathbb{L}^*$. Thus if one considers the implementation of the stroboscopic tomography in an experiment, its effectiveness depends on whether one can measure such a quantity that the corresponding hermitian operator does not belong to the invariant subspace of the Heisenberg generator.

The last theorem which will be presented in this section gives the condition for the choice of time instants.

\begin{thm}
The determination of the initial state of the quantum system with evolution given by \eqref{eq:kossakowski} and which is $(Q_1,...Q_r)$-reconstructible is possible if the time instants $\{t_j\}_{j=1}^{\mu}$ satisfy the condition \cite{jam04}
\begin{equation}
det \left [ \alpha_k (t_j) \right ] \neq 0,
\end{equation}
where $k=0,1,..., \mu-1$.
In the above relation $\alpha_k(t_j)$ denotes the functions that appear in the polynomial representation of the semigroup $\Phi(t) = exp(\mathbb{L}t)$.
\end{thm}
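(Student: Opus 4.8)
The plan is to reduce the reconstruction of $\rho(0)$ to the solution of a family of square linear systems, one for each observable, whose common coefficient matrix is precisely $[\alpha_k(t_j)]$. I would begin from the expanded measurement formula \eqref{eq:measure},
\begin{equation}
m_i(t_j) = \sum_{k=0}^{\mu-1} \alpha_k(t_j)\, \langle (\mathbb{L}^*)^k Q_i | \rho(0) \rangle,
\end{equation}
and, fixing an observable index $i$, treat the $\mu$ projections $\langle (\mathbb{L}^*)^k Q_i | \rho(0)\rangle$ for $k=0,\dots,\mu-1$ as the unknowns. Allowing the time instant to range over $j=1,\dots,\mu$ yields $\mu$ scalar equations in these $\mu$ unknowns whose coefficient matrix is exactly $A=[\alpha_k(t_j)]$, with rows labelled by $j$ and columns by $k$.

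The second step is purely linear-algebraic. Because $\det[\alpha_k(t_j)]\neq 0$ by hypothesis, the matrix $A$ is invertible, so this system has a unique solution and every projection $\langle (\mathbb{L}^*)^k Q_i | \rho(0)\rangle$ is recovered from the measured data. Crucially $A$ does not depend on $i$, so a single invertibility condition simultaneously resolves the systems for all $r$ observables; inverting $A$ therefore produces the full collection $\langle (\mathbb{L}^*)^k Q_i | \rho(0)\rangle$ for $k=0,\dots,\mu-1$ and $i=1,\dots,r$ out of the experimental matrix \eqref{eq:matrix}.

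The final step invokes the reconstructibility hypothesis. By Theorem 2 the assumption that the system is $(Q_1,\dots,Q_r)$-reconstructible is equivalent to saying that the operators $(\mathbb{L}^*)^k Q_i$ span $B_*(\mathcal{H})$. Knowing the inner product of $\rho(0)$ against every member of a spanning set of the self-adjoint operators pins down the self-adjoint operator $\rho(0)$ uniquely, whence the entire trajectory $\exp(\mathbb{L}t)\rho(0)$ is determined. I do not anticipate a genuine obstacle, since each step is elementary; the only points deserving care are the bookkeeping that one determinant controls the invertibility for every observable at once, and the remark that the mutual linear independence of the functions $\alpha_k$ (noted just after \eqref{eq:measure}) is what guarantees that instants realizing $\det[\alpha_k(t_j)]\neq 0$ exist in the first place.
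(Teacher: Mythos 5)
Your proposal is correct and follows essentially the same route the paper itself sketches around equation \eqref{eq:measure}: invert the $\mu\times\mu$ system with coefficient matrix $[\alpha_k(t_j)]$ to extract the projections $\langle (\mathbb{L}^*)^k Q_i \,|\, \rho(0)\rangle$, then use the spanning condition of Theorem 2 to pin down $\rho(0)$. The paper defers the formal proof to the cited reference, but your argument is a faithful and complete rendering of that outline, including the correct observation that a single determinant condition serves all observables simultaneously.
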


Having summarized the most important general results concerning the stroboscopic tomography, we can proceed to analyzing specific examples. In the main part of this article there are three different decoherence models of 2-level quantum systems, to which the stroboscopic approach has been applied. Section 2 is devoted to the problem of dephasing, in which the stroboscopic approach allows us to give the concrete formula for the initial density matrix. In section 3 we discuss the usefulness of the stroboscopic approach in case of depolarization, which is another model of decoherence. Finally, in section 4 we tackle a more general problem, where the stroboscopic approach seems to have the greatest advantage. In that section we introduce a parametric-dependent family of Kraus operators for which the generator of evolution has no degenerate eigenvalues, i.e. in that case there exists one observable the measurement of which performed at three different instants is sufficient to reconstruct the initial density matrix.

\section{Quantum tomography model for dephasing}

In this section we are analyzing a decoherence model which is called dephasing. It is a model of two-level atoms subject to fluctuating external magnetic or laser fields. In geometric language it refers to shrinking of the Bloch ball in $x$ and $y$ directions, and $z$ being left intact.

The canonical Kraus operators have the following forms \cite{nielsen00}:
\begin{equation}\label{eq:1}
K_0 (t) = \sqrt{\frac{1+\kappa(t)}{2}} I \text{, } K_1 (t) = \sqrt{\frac{1-\kappa(t)}{2}} \sigma_3,
\end{equation}
where $\kappa(t)$ is a function which depends on time and can be expressed as $\kappa(t) = e^{-\gamma t}$, where $\gamma \in \mathbb{R}_+ $ is a dephasing parameter.

The collection of Kraus operators constitutes a completely positive and trace-preserving map (i.e. it is a quantum channel). Therefore, $\rho(t)$ at any time can be obtained from the formula
\begin{equation}\label{eq:map}
\rho(t) = \sum_{i=0} ^1 K_i(t) \rho(0) K_i^*(t) = \frac{1+ \kappa(t)}{2} \rho(0) + \frac{1-\kappa(t)}{2} \sigma_3 \rho (0) \sigma_3.
\end{equation}

Having the specific form of Kraus operators one can obtain the Kossakowski-Lindblad equation for evolution of such a system by differentiating the equation \eqref{eq:map}. It has the following form:
\begin{equation}\label{eq:2}
\frac{d \rho}{d t} = \frac{\gamma}{2} (\sigma_3 \rho \sigma_3 - \frac{1}{2} \{ \sigma_3 ^2 , \rho \} ).
\end{equation}
Here, since $\sigma_3$ is a self-adjoint operator, it can also be presented in terms of a double commutator
\begin{equation}\label{eq:3}
\frac{d \rho}{d t} = - \frac{\gamma}{4} [\sigma_3, [\sigma_3, \rho]].
\end{equation}

The explicit form of the generator of evolution can be obtained by using the relation from vectorization theory \cite{henderson81}
\begin{equation}\label{eq:4}
vec(ABC) = (C^T \otimes A) vec B,
\end{equation}
where it is assumed that  the matrices $A,B,C$ are selected in such a way that the matrix product $ABC$ is computable, i.e. the corresponding sizes of $A,B,C$ are $s_1\times s_2, s_2\times s_3$ and $s_3 \times s_4$, where $s_1,s_2,s_3,s_4 \in \mathbb{N}$.

Taking into account this property one can get the generator of evolution for this system in the matrix form
\begin{equation}\label{eq:5}
\mathbb{L} = -\gamma \left[\begin{matrix} 0 &0 &0 &0\\ 0 & 1 & 0 & 0\\0 & 0 & 1 & 0\\0 &0 &0 & 0\end{matrix}\right].
\end{equation}

Calculating the characteristic polynomial of this operator one obtains
\begin{equation}\label{eq:6}
det(\mathbb{L} - \lambda I ) = \lambda^2 (\lambda + \gamma)^2,
\end{equation}
from which one can observe that the index of cyclicity of the system with such a generator is equal $2$. According to the theorem \ref{thm:1} it means that there exist two observables the mean values of which enable us to reconstruct the initial density operator and, as a result, the whole trajectory of the state. We can instantly notice the benefit of the stroboscopic approach in comparison with the standard model - here we measure two different quantities instead of three.

Furthermore, it can be noticed that operator $\mathbb{L}$ fulfills an equality
\begin{equation}\label{eq:7}
\mathbb{L}^2 + \gamma \mathbb{L} =0,
\end{equation}
which means that $deg \text{ }\mu (\lambda, \mathbb{L}) = 2$, where by $\mu (\lambda, \mathbb{L})$ one should understand the minimal polynomial of $\mathbb{L}$. 

The observables $Q_1, Q_2$ that are needed to perform quantum tomography have to satisfy the necessary condition, which is
\begin{equation}\label{eq:8}
\bigoplus \limits_{i=0}^2 K_2 (\mathbb{L},Q_i) = B_*(\mathcal{H}),
\end{equation}
where $K_2 (\mathbb{L},Q_i$) denotes a Krylow subspace and can be rewritten as
\begin{equation}\label{eq:9}
K_2 (\mathbb{L},Q_i) = Span \{ Q_i, L^* Q_i\}.
\end{equation}
If we take  $Q_1 = \sigma_1$ and $Q_2 = \sigma_2 + \sigma_3$, it can be observed that $\mathbb{L}^* Q_1 = - \gamma Q_1 = -\gamma \sigma_1$ and $\mathbb{L}^* Q_2 = - \gamma \sigma_2$ . Then as $\mathbb{I},Q_1, Q_2$ and $\mathbb{L}^* Q_2$ are linearly independent the condition \eqref{eq:8} is fulfilled. 

Therefore, in order to reconstruct the density matrix $\rho (0)$ for the system which evolution is given by \eqref{eq:3} it is enough to measure $Q_1$ once and $Q_2$ twice for different time instants $t_1$ and $t_2$.

To obtain a specific formula for the density matrix we have to analyze the polynomial representation of the completely positive map $ \Phi (t) = exp (\mathbb{L}t)$. It has already been mentioned that $deg \text{ }\mu (\lambda, \mathbb{L}) = 2$, thus the polynomial representation takes form
\begin{equation}\label{eq:10}
exp ( \mathbb{L} t ) = \alpha_0 (t) \mathbb{I} + \alpha_1 (t) \mathbb{L}.
\end{equation}
Coefficients $\alpha_i (t)$ that appear in this equation can be easily computed because they satisfy a set of differential equations \cite{jam04}, which has been mentioned in the introductory section. One can easily get
\begin{equation}\label{eq:11}
\alpha_0 = 1 \text{ and } \alpha_1 (t) = \frac{1}{\gamma} (1 - e ^{- \gamma t}).
\end{equation}

Bearing in mind the general representation of a result of measurement (see equation \eqref{eq:measure}) we shall write the formula for $m_2 (t_1)$
\begin{equation}
m_2 (t_1) = \langle Q_2 | \rho(t_1) \rangle = \langle Q_2 | exp(\mathbb{L} t_1) \rho(0) \rangle = \sum_{k=0} ^1 \alpha_k (t_1) \langle (\mathbb{L}^*)^k Q_2 | \rho(0) \rangle.
\end{equation}
As the observable $Q_2$ is going to be measured twice, we obtain two similar equations for the two results. They can be combined into a matrix equation
\begin{equation}\label{eq:12}
\left[ \begin{matrix} m_2 (t_1) \\ m_2(t_2) \end{matrix} \right ] = \left[ \begin{matrix} 1 & \frac{1}{\gamma} (1 - e ^{- \gamma t_1})\\  1 & \frac{1}{\gamma} (1 - e ^{- \gamma t_2}) \end{matrix} \right ] \left[ \begin{matrix} \langle Q_2 | \rho (0) \rangle \\ \langle \mathbb{L}^* Q_2 | \rho (0) \rangle \end{matrix} \right],
\end{equation}
where $m_2(t_1)$ and $m_2(t_2)$ are results obtained from measurement, i.e. mean values of the observable $Q_2$ in two different time instants: $m_2(t_i) = Tr ( Q_2 \rho (t_i) )$. It is evident that if $ t_1 \neq t_2$,
\begin{equation}\label{13}
det \left[ \begin{matrix} 1 & \frac{1}{\gamma} (1 - e ^{- \gamma t_1})\\  1 & \frac{1}{\gamma} (1 - e ^{- \gamma t_2}) \end{matrix} \right ] \neq 0.
\end{equation}
Therefore, from equation \eqref{eq:12} we can calculate projections of $\rho (0)$ into the operators $Q_2$ and $\mathbb{L}^* Q_2$. We obtain the following results
\begin{equation}\label{eq:14}
\langle Q_2 | \rho (0) \rangle  = \frac{m_2(t_1) (1- e ^ {- \gamma t_2}) - m_2(t_2) (1 - e ^{- \gamma t_1})}{e^{- \gamma t_1} - e ^{- \gamma t_2}},
\end{equation}
\begin{equation}\label{eq:15}
\langle \mathbb{L}^* Q_2 | \rho (0) \rangle  = \frac{\gamma ( m_2 (t_2)-m_2 (t_1))}{e^{- \gamma t_1} - e ^{- \gamma t_2}}.
\end{equation}

For the observable $Q_1$ it is sufficient to write one equation and transform it in the appropriate way
\begin{equation}\label{eq:16}
\begin{aligned}
m_1 (t_1){} & = \langle Q_1 | \rho (0) \rangle + \frac{1}{\gamma} (1 - e ^{- \gamma t_1}) \langle \mathbb{L}^* Q_1 | \rho (0) \rangle = \\ &
= \langle Q_1 | \rho (0) \rangle + \frac{1}{\gamma} (1 - e ^{- \gamma t_1}) (-\gamma) \langle Q_1 | \rho (0) \rangle = \\ &
= (1+ ( e ^{- \gamma t_1}-1)) \langle Q_1 | \rho (0) \rangle = \\& = e^{- \gamma t_1} \langle Q_1 | \rho(0) \rangle,
\end{aligned}
\end{equation}
from which we get the projection of $\rho(0)$ into operator $Q_1$
\begin{equation}\label{eq:17}
\langle Q_1 | \rho (0) \rangle = m_1 (t_1)e ^{ \gamma t_1}.
\end{equation}

In order to obtain an explicit formula for the density matrix we will use the theorem that any two-dimensional density matrix can be expanded using the identity operator $\mathbb{I}$ and the traceless Pauli matrices $\{\sigma_1, \sigma_2, \sigma_3\}$. The decomposition takes the from
\begin{equation}\label{eq:18}
\rho (0) = \frac{1}{2} \left ( \mathbb{I} + \sum_{i=1}^3 s_i \sigma_i \right ),
\end{equation}
where $s_i = Tr(\sigma_i \rho(0)) = \langle \sigma_i | \rho (0) \rangle$.

We can notice that
\begin{equation}\label{eq:19}
\langle Q_2 | \rho (0) \rangle = \langle \sigma_2 + \sigma_3 | \rho(0) \rangle = \langle \sigma_2|\rho(0)\rangle + \langle \sigma_3 | \rho(0) \rangle,
\end{equation}
\begin{equation}\label{eq:20}
\langle \mathbb{L}^* Q_2|\rho(0) \rangle = - \gamma \langle \sigma_2 | \rho(0) \rangle.
\end{equation}
Taking these equations into account we get the projections we need to reconstruct the density matrix
\begin{equation}\label{eq:21}
\langle \sigma_2| \rho(0) \rangle = \frac{  m_2 (t_1)-m_2 (t_2)}{e^{- \gamma t_1} - e ^{- \gamma t_2}}
\end{equation}
and
\begin{equation}\label{eq:22}
\langle \sigma_3|\rho(0) \rangle = \frac{m_2(t_2) e^{- \gamma t_1} - m_2(t_1) e ^{- \gamma t_2}}{e^{- \gamma t_1} - e ^{- \gamma t_2}}
\end{equation}

Having found all the projections of $\rho (0)$ one can write an explicit formula for the density matrix
\begin{equation}\label{eq:23}
\rho (0) = \frac{1}{2} \left( \mathbb{I} + m_1 (t_1)e ^{ \gamma t_1} \sigma_1 + \frac{  m_2 (t_1)-m_2 (t_2)}{e^{- \gamma t_1} - e ^{- \gamma t_2}} \sigma_2 + \frac{m_2(t_2) e^{- \gamma t_1} - m_2(t_1) e ^{- \gamma t_2}}{e^{- \gamma t_1} - e ^{- \gamma t_2}} \sigma_3 \right ),
\end{equation}
which is the final result of this analysis.

In this section we were capable of creating a successful model of quantum tomography by using the stroboscopic approach. The advantage over the standard model is the fact that here one needs to measure only two different observables instead of three. The number of pairs of observables that fulfill the necessary conditions of observability for the system in question is infinite. Nevertheless, it has been shown for the selected two observables that it is possible to obtain the explicit formula for the initial density operator.

\section{An attempt to apply the stroboscopic tomography to depolarization}

Depolarization is another model of decoherence that is being analyzed in this paper. Geometrically speaking, this model refers to squeezing the Bloch ball uniformly in the three directions. The Kraus operators for depolarization have the forms \cite{nielsen00}
\begin{equation}\label{eq:24}
K_0 (t) = \sqrt{\frac{1+3\kappa (t)}{4}} \mathbb{I} \text{,  } \left \{ K_i (t) = \sqrt{\frac{1-\kappa (t)}{4}} \sigma_i \right \}_{i=1,2,3},
\end{equation}
where $\kappa (t)$ depends on time and can be expressed as $\kappa (t) = e^{-\gamma t}$, where $\gamma \in \mathbb{R}_+$ is called depolarizing parameter.

Having the collection of Kraus operators for this case one can calculate the derivative of $\rho$ analogously as in section 2. It leads to the evolution equation in the Kossakowski-Lindblad form. One can obtain
\begin{equation}\label{eq:25}
\frac{d \rho}{ d t } = \frac{\gamma}{4} \left (  \sum_{i=1} ^3 \sigma_i \rho \sigma_i - 3 \rho \right ).
\end{equation}
As in this case the operators $ \{ \sigma_i \}_{i=1,2,3}$ that govern the evolution are hermitian, the master equation can be represented as a sum of double commutators
\begin{equation}\label{eq:26}
\frac{d \rho}{ d t } = - \frac{\gamma}{8} \sum_{i=1}^3 [\sigma_i,[\sigma_i, \rho]].
\end{equation}

Applying the relation from vectorization theory \eqref{eq:4}, one can get the explicit form of the generator of evolution 
\begin{equation}\label{eq:27}
\mathbb{L} = \frac{\gamma}{4} \left ( \sigma_1 \otimes \sigma_1 + \sigma_2 ^T \otimes \sigma_2 + \sigma_3 \otimes \sigma_3 - 3 \mathbb{I}_4 \right ),
\end{equation}
which can also be presented in the matrix form
\begin{equation}\label{eq:28}
\mathbb{L} = -\frac{\gamma}{2} \left[\begin{matrix} 1 &0 &0 &-1\\ 0 & 2 & 0 & 0\\0 & 0 & 2 & 0\\-1 &0 &0 & 1\end{matrix}\right].
\end{equation}

Now one can find the eigenvalues of the operator $\mathbb{L}$
\begin{equation}\label{eq:29}
\lambda_1 =  -\gamma \text{ and } \lambda_2 = 0,
\end{equation}
and their corresponding multiplicities which are
\begin{equation}\label{eq:30}
n_1 = 3 \text{ and } n_2 = 1.
\end{equation}

One can quickly check that there are three linearly independent eigenvectors of $\mathbb{L}$ that correspond to the eigenvalue $\lambda_1$. It means that the index of cyclicity for the generator of evolution given by \eqref{eq:28} is equal $3$, which implies that we need $3$ different observables to perform quantum tomography on the system. Therefore, in case of depolarization stroboscopic approach to tomography has no advantage over the standard tomography model for a 2-level system.\\

\section{One-parametric non-degenerate family of Kraus operators}

In this section, before we introduce the main result, we shall assume two definitions. One concerns a collection of Kraus operators that can be associated with any completely positive map. The other one relates to a family of Kraus operators which should be understood as parametric-dependent and, therefore, more general case. To illustrate the difference between these two terms we shall first revise the theorem on completely positive maps \cite{bengtsson06}.
\begin{thm}
A linear map $\Phi (t):  B(\mathcal{H}) \rightarrow  B(\mathcal{H})$ is completely positive if and only if it is of the form
\begin{equation}
\Phi (t) (X) = \sum_{i=1} ^ \zeta K_i (t) X K_i ^* (t),
\end{equation}
where $ K_i (t) \in B(\mathcal{H})$.
\end{thm}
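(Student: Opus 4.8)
The plan is to establish the two implications of this \emph{if and only if} statement separately, treating $t$ as a fixed parameter so that for each $t$ we are proving the classical Kraus representation theorem for a single linear map $\Phi := \Phi(t)$ on the finite-dimensional space $B(\mathcal{H})$ with $N = \dim \mathcal{H}$. The sufficiency direction is elementary and I would dispatch it first; the necessity direction is the substantial part, and I would base it on the Choi--Jamio{\l}kowski isomorphism, a natural choice here since the paper already uses the vectorization map $vec$ identifying operators on $\mathcal{H}$ with vectors in $\mathcal{H} \otimes \mathcal{H}$.

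For sufficiency, suppose $\Phi(X) = \sum_{i=1}^\zeta K_i X K_i^*$. Fix an arbitrary $n \in \mathbb{N}$ and a positive semidefinite $Y \in B(\mathcal{H}) \otimes M_n(\mathbb{C})$, where $M_n(\mathbb{C})$ denotes the $n \times n$ complex matrices. I would compute
\[
(\Phi \otimes \mathrm{id}_n)(Y) = \sum_{i=1}^\zeta (K_i \otimes \mathbb{I}_n)\, Y\, (K_i \otimes \mathbb{I}_n)^*,
\]
and observe that each congruence $Z \mapsto A Z A^*$ preserves positive semidefiniteness while finite sums of positive operators remain positive. Hence $(\Phi \otimes \mathrm{id}_n)(Y) \geq 0$ for every $n$, which is precisely complete positivity.

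For necessity, the central object is the Choi matrix. Fixing an orthonormal basis $\{|e_k\rangle\}_{k=1}^N$ of $\mathcal{H}$ and writing $|\Omega\rangle = \sum_{k=1}^N |e_k\rangle \otimes |e_k\rangle$, I would set
\[
C_\Phi = (\Phi \otimes \mathrm{id}_N)\big(|\Omega\rangle\langle\Omega|\big) = \sum_{k,l=1}^N \Phi(|e_k\rangle\langle e_l|) \otimes |e_k\rangle\langle e_l|.
\]
Since $|\Omega\rangle\langle\Omega| \geq 0$, complete positivity applied with $n = N$ forces $C_\Phi \geq 0$. I would then spectrally decompose $C_\Phi = \sum_{i=1}^\zeta |v_i\rangle\langle v_i|$ into at most $N^2$ positive rank-one terms with $|v_i\rangle \in \mathcal{H} \otimes \mathcal{H}$, and use the vectorization isomorphism to attach to each $|v_i\rangle$ a unique operator $K_i \in B(\mathcal{H})$ determined by $|v_i\rangle = (K_i \otimes \mathbb{I})|\Omega\rangle$. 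Finally I would confirm that $X \mapsto \sum_i K_i X K_i^*$ coincides with $\Phi$ by evaluating both maps on the matrix units $|e_k\rangle\langle e_l|$, which span $B(\mathcal{H})$, and invoking linearity; reading off the $(k,l)$ block of $(K_i \otimes \mathbb{I})|\Omega\rangle\langle\Omega|(K_i \otimes \mathbb{I})^*$ returns exactly $K_i |e_k\rangle\langle e_l| K_i^*$.

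The main obstacle is the necessity direction, and within it the key nontrivial step is recognizing that the infinitely many positivity constraints defining complete positivity are all encoded in the positivity of the single finite matrix $C_\Phi$: one well-chosen positive input, namely the unnormalized maximally entangled projector $|\Omega\rangle\langle\Omega|$, suffices. The remaining work, i.e. the spectral decomposition and the operator/vector bookkeeping that turns each eigenvector into a Kraus operator, is routine once this correspondence is in place, as is the closing verification on matrix units.
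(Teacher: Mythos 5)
The paper does not actually prove this statement: it is quoted as a known result (Choi's theorem / the Kraus representation theorem) with a citation to \cite{bengtsson06}, and serves only to motivate the definitions of a collection versus a family of Kraus operators. Your argument is a correct, self-contained proof of that cited result, and it follows the standard route: the easy direction via the fact that congruences $Z \mapsto (K_i \otimes \mathbb{I}_n) Z (K_i \otimes \mathbb{I}_n)^*$ and finite sums preserve positivity, and the hard direction via positivity of the Choi matrix $C_\Phi = (\Phi \otimes \mathrm{id}_N)(|\Omega\rangle\langle\Omega|)$, its rank-one decomposition, and the vectorization correspondence $|v_i\rangle = (K_i \otimes \mathbb{I})|\Omega\rangle$, checked on matrix units. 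The one point worth making explicit if this were written out in full is that the rank-one pieces $|v_i\rangle\langle v_i|$ come from the spectral decomposition with the nonnegative eigenvalues absorbed as $|v_i\rangle = \sqrt{\lambda_i}\,|u_i\rangle$, which is exactly where positivity of $C_\Phi$ is used; also note that your argument only needs $N$-positivity rather than full complete positivity, which is a (well-known) strengthening of the stated equivalence, not a gap.
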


Now we can make a distinction between a collection of Kraus operators and a family of Kraus operators.
\begin{defn}
A set of operators $\{K_i (t)\}_{i=1}^{\zeta}$ that appear in the above decomposition of a completely positive map is called a collection of Kraus operators.
\end{defn}
\begin{defn}

If we have a completely positive map $\Phi(t, a)$, where $t$ stands for time and $a\in < a_{min}, a_{max}>$ is a real parameter that influences the structure of the generator of evolution $\mathbb{L}$ associated with the map, and we denote the sum representation of $\Phi(t, a)$ as
\begin{equation}
\Phi(t, a) (X) = \sum_{i=1}^{\zeta} K_i (t,a) X K_i^* (t,a),
\end{equation}
then we shall call the set of operators $\{K_i (t, a) \}_{i=1}^{\zeta}$ \textit{a one-parametric family of Kraus operators} on the interval $ < a_{min}, a_{max}>$.
\end{defn}
\begin{rem}

According to the definition of the one-parametric family of Kraus operators, the parameter has to influence the structure of the generator $\mathbb{L}$ and, consequently, the positive constant $\gamma$ from the two previous sections does not create a family of Kraus operators as the generator $\mathbb{L}$ is proportional do $\gamma$ (see \eqref{eq:5} and \eqref{eq:28}).
\end{rem}
\begin{rem}
In a similar way we could introduce the definition of \textit{k-parametric family of Kraus operators} for $k=2,3,\cdots$
\end{rem}

One can notice that according to these definitions a family of Kraus operators comprises an infinite number of collections of Kraus operators.

Now we can proceed to the main part of this section. As the third model of decoherence we are analyzing the case when the evolution of the open quantum system is given by a one-parametric family of Kraus operators
\begin{equation}\label{eq:31}
\begin{aligned}
{} & K_0 (t,a) = \sqrt{\frac{1+ 2 \kappa(t)}{3}} \mathbb{I} \text{,  } K_1(t,a) = \sqrt{\frac{a(1-\kappa(t))}{3}} \sigma_1 \\ & \text{ and } K_2(t,a) = \sqrt{\frac{(2-a)(1-\kappa(t))}{3}} \sigma_2,
\end{aligned}
\end{equation}
where $\kappa(t)$ depends on time according to $\kappa (t) = e^{-\gamma t}$, where $\gamma \in \mathbb{R}_+$ is a decoherence parameter, and $a$ is a real parameter that influences the structure of the generator of evolution.

Let us observe that it is a family of Kraus operators for $a \in \mathbb{R}$. Nevertheless, some constraints need to be found concerning $a$ because the family of Kraus operators from \eqref{eq:31} have to constitute a completely positive map which is strictly trace-preserving as the evolution should retain all the properties of the density operator. One can notice that the sufficient and necessary conditions for this are
\begin{equation}
a \geq 0 \text{ and } 2 - a \geq 0.
\end{equation}

It is easy to check that under these two condition the following equality holds
\begin{equation}\label{eq:34}
\sum_{i=0} ^2 K_i ^* (t,a) K_i (t,a) = \mathbb{I},
\end{equation}
thus any collection of Kraus operators taken from the family \eqref{eq:31} with $a \in <0;2> $ constitutes a quantum channel and can be treated as a model of decoherence.

In this section we propose the following theorem concerning the introduced family of Kraus operators.
\begin{thm}
If the evolution of an open quantum system is given by a collection of Kraus operators from the family defined in \eqref{eq:31} for any $a\in (0;2) $ \textbackslash $\{1\}$, then the eigenvalues of the generator of evolution $\mathbb{L}$ are non-degenerate.
\end{thm}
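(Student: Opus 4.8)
The plan is to follow the same computational pipeline used in Sections 2 and 3: first derive the Kossakowski--Lindblad generator from the family \eqref{eq:31}, then vectorize it via \eqref{eq:4} to obtain the matrix representation of $\mathbb{L}$, and finally compute the spectrum explicitly and determine exactly when two eigenvalues collide. First I would write out the action of the channel,
\begin{equation}
\rho(t) = \tfrac{1+2\kappa(t)}{3}\rho(0) + \tfrac{a(1-\kappa(t))}{3}\sigma_1\rho(0)\sigma_1 + \tfrac{(2-a)(1-\kappa(t))}{3}\sigma_2\rho(0)\sigma_2,
\end{equation}
and differentiate it. Since $\Phi(t,a)$ forms a semigroup and $\kappa(0)=1$, $\dot{\kappa}(0)=-\gamma$, evaluating $\dot{\rho}=\mathbb{L}\rho$ at $t=0$ (where $\rho(t)=\rho(0)$ is arbitrary) identifies the generator as
\begin{equation}
\mathbb{L}\rho = \frac{\gamma}{3}\left(a\,\sigma_1\rho\sigma_1 + (2-a)\,\sigma_2\rho\sigma_2 - 2\rho\right).
\end{equation}

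Next I would apply the vectorization identity \eqref{eq:4}, exactly as in the derivation of \eqref{eq:27}. The one place requiring care is that $\sigma_2$ is not a real matrix, so the correct tensor factor is $\sigma_2^T\otimes\sigma_2$ with $\sigma_2^T=-\sigma_2$, whereas for $\sigma_1$ the transpose is harmless. Collecting the three contributions and subtracting $\tfrac{2\gamma}{3}\mathbb{I}_4$ yields
\begin{equation}
\mathbb{L} = \frac{\gamma}{3}\left[\begin{matrix} -2 & 0 & 0 & 2\\ 0 & -2 & 2a-2 & 0\\ 0 & 2a-2 & -2 & 0\\ 2 & 0 & 0 & -2\end{matrix}\right].
\end{equation}

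The decisive observation is that this matrix is block-diagonal once the indices are paired as $\{1,4\}$ and $\{2,3\}$. Each block is a symmetric $2\times 2$ matrix of the form $\frac{\gamma}{3}\left[\begin{smallmatrix} -2 & b\\ b & -2\end{smallmatrix}\right]$ whose eigenvalues are simply $\frac{\gamma}{3}(-2\pm b)$. The block on $\{1,4\}$ has $b=2$ and contributes $0$ and $-\tfrac{4\gamma}{3}$; the block on $\{2,3\}$ has $b=2a-2$ and contributes $-\tfrac{2\gamma(2-a)}{3}$ and $-\tfrac{2\gamma a}{3}$. Hence the spectrum is
\begin{equation}
\sigma(\mathbb{L}) = \left\{\,0,\ -\tfrac{4\gamma}{3},\ -\tfrac{2\gamma(2-a)}{3},\ -\tfrac{2\gamma a}{3}\,\right\}.
\end{equation}

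Finally I would check the six pairwise equalities. Since $\gamma>0$, a short case analysis shows that two eigenvalues coincide precisely at the excluded parameter values: $a=0$ makes $-\tfrac{2\gamma a}{3}=0$ (and $-\tfrac{2\gamma(2-a)}{3}=-\tfrac{4\gamma}{3}$), $a=2$ makes $-\tfrac{2\gamma(2-a)}{3}=0$ (and $-\tfrac{2\gamma a}{3}=-\tfrac{4\gamma}{3}$), and $a=1$ makes $-\tfrac{2\gamma(2-a)}{3}=-\tfrac{2\gamma a}{3}$. For every $a\in(0;2)\backslash\{1\}$ none of these hold, so all four eigenvalues are distinct and the spectrum is non-degenerate, which is the claim. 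There is no genuine obstacle here beyond bookkeeping; the only spot to slip is the transpose in the $\sigma_2$ term. The illuminating point worth emphasizing in the write-up is that the three forbidden values $a\in\{0,1,2\}$ are exactly the crossings of distinct eigenvalue branches, which is why the statement is sharp on the open interval with the midpoint removed.
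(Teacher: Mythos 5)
Your proposal is correct and follows essentially the same route as the paper: derive the Lindblad form \eqref{eq:35}, vectorize via \eqref{eq:4} to get \eqref{eq:36}, compute the spectrum \eqref{eq:37}, and observe the eigenvalues are pairwise distinct for $a\in(0;2)\setminus\{1\}$. Your explicit $2\times 2$ block decomposition and the case analysis of eigenvalue crossings are just a more detailed write-up of the same computation.
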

\begin{proof}

Kraus operators allow us to find the equation for evolution of the system, which takes the following form
\begin{equation}\label{eq:35}
\frac{d \rho}{d t} = - \frac{2}{3} \gamma \rho + \frac{a}{3} \gamma \sigma_1 \rho \sigma_1 + \frac{2-a}{3} \gamma \sigma_2 \rho \sigma_2.
\end{equation}

Using again the idea of vectorization \eqref{eq:4} the explicit form of the generator $\mathbb{L}$ can be written as
\begin{equation}\label{eq:36}
\mathbb{L} = \frac{\gamma}{3} \left ( - 2 \mathbb{I}_4 + a \sigma_1 \otimes \sigma_1 + (2-a) \sigma_2 ^T \otimes \sigma_2 \right ).
\end{equation}
Consequently its spectrum can be computed
\begin{equation}\label{eq:37}
\sigma( \mathbb{L} ) = \{ - \frac{4}{3} \gamma, 0, \frac{2}{3} (a-2)\gamma,-\frac{2}{3} a \gamma \}.
\end{equation}

Now it can be observed that for any $a\in (0;2) $ \textbackslash $\{1\}$ the spectrum of the generator $\mathbb{L}$ consists of four different eigenvalues, which completes the proof.
\end{proof}

The theorem presented in this section shows that there exists a one-parametric family of Kraus operators for which eigenvalues of the generator of evolution $\mathbb{L}$ are non-degenerate, which means that the index of cyclicity is equal 1. Henceforth a family that possesses this property shall be called a \textit{one-parametric non-degenerate family of Kraus operators}.

The result means that for any evolution given by \eqref{eq:31} with $a\in (0;2)$  \textbackslash $\{1\}$ there exists one observable the measurement of which performed at three different instants allows us to reconstruct the initial density matrix and, as a result, the trajectory of the state. From experimental point of view the stroboscopic approach seems to have a considerable advantage over the standard tomography as in this case an experimentalist needs to prepare only one kind of measurement and repeat it three times instead of performing three different measurements.

\section{Summary}
This paper gives a brief insight into the possible applications of the stroboscopic tomography to 2-level decoherence models. It has been shown that the usefulness of this approach differs depending on the kind of evolution. The most promising result is described in section 4. In that part it has been proved that there exists a one-parametric non-degenerate family of Kraus operators, which means that for an infinite number of generators of evolution the index of cyclicity is equal 1, i.e. for any $a\in (0;2) $ \textbackslash $\{1\}$ there exists one observable the measurement of which performed at three different time instants provides sufficient data to determine the trajectory of the state with dynamics given by \eqref{eq:31}. Further research into the problem of non-degenerate families of Kraus operators is planned for the foreseeable future.
\section*{Acknowledgement}
This research has been supported by grant No. DEC-2011/02/A/ST1/00208 of National Science Center of Poland.

\end{document}